\newtheorem {Theorem}                 {Theorem}         [section]
\newtheorem {theorem}      [Theorem]  {Theorem}
\newtheorem {myAlgorithm}    [Theorem]  {Algorithm}
\newtheorem {lemma}        [Theorem]  {Lemma}
\journal{arXiv}
\begin{document}
	\begin{frontmatter}
		\title{Computing $2$-twinless blocks}
		\author{Raed Jaberi}
		
		\begin{abstract}  
		Let $G=(V,E))$ be a directed graph. A $2$-twinless block in $G$ is a maximal subset $B\subseteq V$ of size at least $2$ such that for each pair of distinct vertices $x,y \in B$, and for each vertex $w\in V\setminus\left\lbrace x,y \right\rbrace $, the vertices $x,y$ are in the same twinless strongly connected component of $G\setminus\left \lbrace w \right\rbrace $.
		In this paper we present algorithms for computing the $2$-twinless blocks of a directed graph.
		\end{abstract} 
		\begin{keyword}
			Directed graphs \sep Connectivity  \sep Graph algorithms \sep $2$-blocks \sep Twinless strongly connected graphs 
		\end{keyword}
	\end{frontmatter}
	\section{Introduction}
	Let $G=(V,E)$ be a directed graph. $G$ is \textit{twinless strongly connected} if it contains a strongly connected spanning subgraph $(V,E^t)$ such that $E^t$ does not contain any pair of antiparallel edges. A \textit{twinless strongly connected component} of $G$ is a maximal subset $C_t \subseteq V$ such that the induced subgraph on $C_t$ is twinless strongly connected. A \textit{strong articulation point} of $G$ is a vertex whose removal increases the number of strongly connected components of $G$. A \textit{strong bridge} of $G$ is an edge whose deletion increases the number of strongly connected components of $G$. A strongly connected graph is $2$-vertex-connected if it has at least $3$ vertices and it has no strong articulation points. A $2$-vertex-connected component of $G$ is a maximal vertex subset $C^v\subseteq V$ such that the induced subgraph on $C^v$ is $2$-vertex-connected.
	A \textit{$2$-directed block} in $G$ is a maximal vertex subset $B^d\subseteq V$ with $|B^d|>1$ such that for any distinct vertices $x,y \in B^d$, the graph $G$ contains two vertex-disjoint paths from $x$ to $y$ and two vertex-disjoint paths from $y$ to $x$. 	A \textit{$2$-edge block} in $G$ is a maximal subset $B^{eb}\subseteq V$ with $|B^{eb}|>1$ such that for any distinct vertices $v, w \in B^{eb}$, there are two edge-disjoint paths from $v$ to $w$ and two edge-disjoint paths from $w$ to $v$ in $G$. A \textit{$2$-strong block} in $G$ is a maximal vertex subset $B^{s}\subseteq V$ with $|B^{s}|>1$ such that for each pair of distinct vertices $x,y \in B^{s}$ and for every vertex $u \in V\setminus \lbrace x,y\rbrace$, the vertices $x$ and $y$ are in the same strongly connected component of the graph $G\setminus \lbrace u\rbrace$. 
	A \textit{twinless articulation point} of $G$ is a vertex whose removal increases the number of twinless strongly connected components of $G$.  A $2$-twinless block in $G$ is a maximal vertex set $B\subseteq V$ of size at least $2$ such that for each pair of distinct vertices $x,y \in B$, and for each vertex $w\in V\setminus\left\lbrace x,y \right\rbrace $, the vertices $x,y$ are in the same twinless strongly connected component of $G\setminus\left \lbrace w \right\rbrace $. Notice that $2$-strong blocks are not necessarily $2$-twinless blocks (see Figure \ref{fig:2twinlessblocksexample}). 

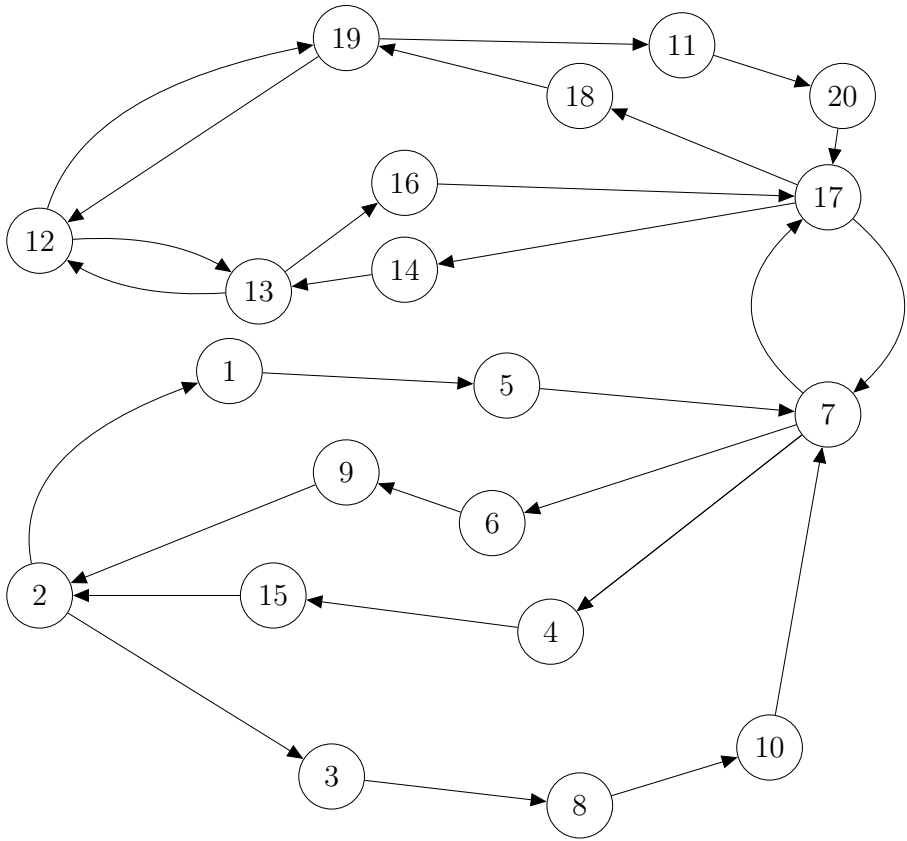
\begin{figure}[h]
	\centering
	\scalebox{0.96}{
		\begin{tikzpicture}[xscale=2]
		\tikzstyle{every node}=[color=black,draw,circle,minimum size=0.9cm]
		\node (v1) at (-1.2,3.1) {$1$};
		\node (v2) at (-2.5,0) {$2$};
		\node (v3) at (-0.5, -2.5) {$3$};
		\node (v4) at (1,-0.5) {$4$};
		\node (v5) at (0.7,2.9) {$5$};
		\node (v6) at (0.6,1) {$6$};
		\node (v7) at (2.9,2.5) {$7$};
		\node (v8) at (1.2,-2.9) {$8$};
		\node (v9) at (-0.4,1.7) {$9$};
		\node (v10) at (2.5,-2.1) {$10$};
	
		\node (v11) at (1.9,7.6) {$11$};
		\node (v12) at (-2.5,4.9) {$12$};
		\node (v13) at (-1, 4.2) {$13$};
		\node (v14) at (0,4.5) {$14$};
		\node (v15) at (-0.9,0) {$15$};
		\node (v16) at (0,5.7) {$16$};
		\node (v17) at (2.9,5.5) {$17$};
		\node (v18) at (1.2,6.9) {$18$};
		\node (v19) at (-0.4,7.7) {$19$};
		\node (v20) at (3,6.9) {$20$};

		\begin{scope}   
		\tikzstyle{every node}=[auto=right]   
		\draw [-triangle 45] (v1) to (v5);
		\draw [-triangle 45] (v2) to (v3);
		\draw [-triangle 45] (v2) to [bend left ](v1);
		\draw [-triangle 45] (v7) to (v4);
		\draw [-triangle 45] (v5) to (v7);
		
		\draw [-triangle 45] (v7) to (v6);
	\draw [-triangle 45] (v9) to (v2);
	\draw [-triangle 45] (v7) to (v4);
	\draw [-triangle 45] (v4) to (v15);
	\draw [-triangle 45] (v15) to (v2);
	\draw [-triangle 45] (v10) to (v7);
	\draw [-triangle 45] (v3) to (v8);
	\draw [-triangle 45] (v6) to (v9);
	\draw [-triangle 45] (v8) to (v10);
	\draw [-triangle 45] (v7) to [bend left ](v17);
	\draw [-triangle 45] (v17) to [bend left ](v7);
	\draw [-triangle 45] (v17) to (v18);
	\draw [-triangle 45] (v20) to (v17);
	\draw [-triangle 45] (v19) to (v12);
	\draw [-triangle 45] (v12) to [bend left ](v19);
	\draw [-triangle 45] (v12) to [bend left ](v13);
	\draw [-triangle 45] (v13) to [bend left ](v12);
	\draw [-triangle 45] (v13) to (v16);
	\draw [-triangle 45] (v16) to (v17);
	\draw [-triangle 45] (v17) to (v14);
	\draw [-triangle 45] (v14) to (v13);
	\draw [-triangle 45] (v18) to (v19);
	\draw [-triangle 45] (v19) to (v11);
	\draw [-triangle 45] (v11) to (v20);
		\end{scope}
		\end{tikzpicture}}
	\caption{A strongly connected graph $G$, which contains two $2$-strong blocks $C_{1}=\left\lbrace 2,7\right\rbrace , C_{2}=\left\lbrace 12,13,17,19\right\rbrace $, and one $2$-twinless block $B=\left\lbrace 2,7 \right\rbrace $. Notice that the vertices $12$ and $17$ do not belong to the same twinless strongly connected component of $G\setminus\left\lbrace 13 \right\rbrace $.}
	\label{fig:2twinlessblocksexample}
\end{figure}
A twinless strongly connected graph $G$ is said to be $2$-vertex-twinless-connected if it has at least three vertices and it does not contain any twinless articulation point \cite{Jaberi19}. A $2$-vertex-twinless-connected component is a maximal subset $U^{2vt} \subseteq V$ such that the induced subgraph on $U^{2vt}$ is $2$-vertex-twinless-connected. While $2$-vertex-twinless-connected components have at least linear number of edges, the subgraphs induced by $2$-twinless blocks do not necessarily contain edges.
	
Strongly connected components can be found in linear time \cite{T72}. In $2006$, Raghavan \cite{SR06} showed that the twinless strongly connected component of a directed graph can be found in linear time.
In $2010$, Georgiadis \cite{G10} presented an algorithm to check whether a strongly connected graph is $2$-vertex-connected in linear time. Italiano et al. \cite{ILS12} gave linear time algorithms for identifying all the strong articulation points and strong bridges of a directed graph. Their algorithms are based on dominators \cite{FILOS12,FILOS16,AHLT99,BGKRTW00,LTW2006,LT79}. In $2014$, Jaberi \cite{J14} presented algorithms for computing the $2$-vertex-connected components of directed graphs in $O(nm)$ time (published in \cite{Jaberi16}). Henzinger et al.\cite{HKL15}
gave algorithms for calculating the $2$-vertex-connected components in $O(n^{2})$ time. Jaberi \cite{Jaberi15} presented algorithms for computing $2$-blocks in directed graphs.  Georgiadis et al. \cite{GILP14SODA,LILPACM2016} gave linear time algorithms for determining $2$-edge blocks. Georgiadis et al. \cite{GILP14VertexConnectivity,GILP2018} also gave linear time algorithms for calculating $2$-directed blocks and $2$-strong blocks. 
Georgiadis et al. \cite{GIKPALENEX2018} and Luigi et al. \cite{LGILP15} performed experimental studies of recent algorithms that calculate $2$-blocks and $2$-connected components in directed graphs.
In $2019$, Jaberi \cite{Jaberi19} presented an algorithm for computing $2$-vertex-twinless-connected components.  Georgiadis and Kosinas \cite{GeorgiadisKosinas20} gave a linear time algorithm for calculating twinless articulation points.

In the following section  we show that  the $2$-twinless blocks of a directed graph can be calculated in $O(n^{3})$ time.
\section{Algorithm for computing $2$-twinless blocks}
 In this section we present an algorithm for computing the $2$-twinless blocks of a twinless strongly connected graph. Since twinless strongly connected components do not share vertices of the same $2$-twinless block, we consider only twinless strongly connected graphs. Let $G=(V,E)$ be a twinless strongly connected graph.  We define a relation $\overset{2t}{\leftrightsquigarrow }$ as follows. For any distinct vertices $x,y\in V$, we write $x \overset{2t}{\leftrightsquigarrow } y$ if for all vertices $w\in V\setminus\lbrace x,y\rbrace$, the vertices $x,y$ are in the same twinless strongly connected component of $G\setminus\lbrace w\rbrace$. By definition, a $2$-twinless block in $G$ is a maximal subset $B^{2t}\subseteq V$ with $|B^{2t}|> 1$ such that for every two vertices $x,y \in B^{2t}$, we have $x \overset{2t}{\leftrightsquigarrow } y$.

The next lemma shows that $2$-twinless blocks share at most one vertex.
\begin{lemma}\label{def:2tbsHasAtMostOneVertex}
	Let $G=(V,E)$ be a twinless strongly connected graph. Let $B^{2t}_1,B^{2t}_2$ be distinct $2$-twinless blocks in $G$. Then $|B^{2t}_1 \cap B^{2t}_2|\leq 1$.
\end{lemma}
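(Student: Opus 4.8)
The plan is to argue by contradiction, exploiting the maximality built into the definition of a $2$-twinless block. Suppose $|B^{2t}_1 \cap B^{2t}_2|\geq 2$ and fix two distinct vertices $a,b \in B^{2t}_1 \cap B^{2t}_2$. I would show that the larger set $B^{2t}_1 \cup B^{2t}_2$ again has the defining property of a block, namely that $x \overset{2t}{\leftrightsquigarrow} y$ for every pair of distinct $x,y \in B^{2t}_1 \cup B^{2t}_2$. Since each of $B^{2t}_1$ and $B^{2t}_2$ is maximal with respect to this property and $|B^{2t}_1 \cup B^{2t}_2|>1$, this would force $B^{2t}_1 \cup B^{2t}_2 = B^{2t}_1 = B^{2t}_2$, contradicting the assumption that the two blocks are distinct. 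Thus the entire argument reduces to verifying $x \overset{2t}{\leftrightsquigarrow} y$ for an arbitrary such pair.

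First I would dispose of the easy cases: if $x,y$ both lie in $B^{2t}_1$, or both in $B^{2t}_2$, then $x \overset{2t}{\leftrightsquigarrow} y$ holds directly by the definition of a block. The remaining case, up to symmetry, is $x \in B^{2t}_1 \setminus B^{2t}_2$ and $y \in B^{2t}_2 \setminus B^{2t}_1$, and here I would use $a,b$ as pivots. Note first that $a,b \notin \{x,y\}$, since $a,b$ lie in both blocks whereas $x \notin B^{2t}_2$ and $y \notin B^{2t}_1$. Now fix an arbitrary $w \in V \setminus \{x,y\}$; the goal is to show that $x$ and $y$ lie in the same twinless strongly connected component of $G \setminus \{w\}$. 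Because $a \neq b$, the vertex $w$ can coincide with at most one pivot, so there is a pivot $p \in \{a,b\}$ with $p \neq w$, and then $p \notin \{x,y,w\}$. Since $x,p \in B^{2t}_1$ we have $x \overset{2t}{\leftrightsquigarrow} p$, and as $w \in V\setminus\{x,p\}$ this yields that $x$ and $p$ lie in the same twinless strongly connected component of $G \setminus \{w\}$; symmetrically, from $y,p \in B^{2t}_2$ and $w \in V \setminus \{y,p\}$, the vertices $y$ and $p$ lie in the same such component.

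The step linking these two facts is where the main work lies. Write $u \sim v$ when $u,v$ lie in the same twinless strongly connected component of $G\setminus\{w\}$; I have shown $x \sim p$ and $p \sim y$, and I want to conclude $x \sim y$. This rests on the fact that, in any digraph, the twinless strongly connected components partition the vertex set \cite{SR06}, so that $\sim$ is an equivalence relation on $V \setminus \{w\}$ and is in particular transitive. Applying transitivity gives $x \sim y$ in $G\setminus\{w\}$, and since $w$ was arbitrary in $V\setminus\{x,y\}$ this establishes $x \overset{2t}{\leftrightsquigarrow} y$, completing the verification that $B^{2t}_1 \cup B^{2t}_2$ is pairwise $\overset{2t}{\leftrightsquigarrow}$-related and hence the contradiction. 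I expect the only genuinely delicate point to be this transitivity/partition property of twinless strongly connected components; every other step is routine bookkeeping about which vertices are pairwise distinct. If the partition property were not available as a citable fact, I would instead have to prove directly that the union of two overlapping twinless strongly connected subgraphs is again twinless strongly connected, which is precisely the statement that makes the components well defined.
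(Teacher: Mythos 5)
Your proof is correct and follows essentially the same route as the paper: contradiction via maximality, a pivot vertex chosen from $B^{2t}_1 \cap B^{2t}_2$ avoiding the removed vertex, and transitivity of membership in twinless strongly connected components (Raghavan's Lemma~1 in \cite{SR06}). The only cosmetic difference is that you fix two pivots $a,b$ in advance and select whichever avoids $w$, while the paper picks a single pivot $v \in (B^{2t}_1 \cap B^{2t}_2)\setminus\{z\}$ after $z$ is fixed; your write-up is in fact slightly more explicit than the paper's in spelling out why all pairs of $B^{2t}_1 \cup B^{2t}_2$ are $\overset{2t}{\leftrightsquigarrow}$-related before invoking maximality.
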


\begin{proof}
	Suppose for the sake of contradiction that $B^{2t}_1$ and $B^{2t}_2$ have at least two vertices in common. Clearly, $B^{2t}_1 \cup B^{2t}_2$ is not a $2$-twinless block in $G$.
	Let $x$ and $y$ be vertices belonging to $B^{2t}_1$ and $B^{2t}_2$, respectively, such that $x,y \notin B^{2t}_1 \cap B^{2t}_2$. Let $z$ be any vertex in $V\setminus \lbrace x,y\rbrace$. Since $|B^{2t}_1 \cap B^{2t}_2|>1$, there is a vertex $v$ in $(B^{2t}_1 \cap B^{2t}_2)\setminus\left\lbrace z \right\rbrace $. Note that $x,v$ are in the same twinless strongly connected component of $G\setminus\left\lbrace z\right\rbrace $ since $x,v \in B^{2t}_1$. Moreover, $v$ and $y$ lie in the same twinless strongly connected component of $G\setminus\left\lbrace z\right\rbrace$. By [\cite{SR06}, Lemma $1$] $x$ and $y$ are in the same twinless strongly connected component of $G\setminus\left\lbrace z\right\rbrace$. Therefore, $x,y$ belong to the same $2$-twinless block.   
\end{proof}
The following lemma shows an interesting property of the relation $\overset{2t}{\leftrightsquigarrow }$.
\begin{lemma}\label{def:AllVerticesOfCycleAreIn2twinlessgBlock}
	Let $G=(V,E)$ be a twinless strongly connected graph and let $\left\lbrace v_0,v_1,\ldots,v_l\right\rbrace $ be set of vertices of $G$ such that $v_l \overset{2t}{\leftrightsquigarrow } v_0$ and $v_{i-1} \overset{2t}{\leftrightsquigarrow } v_i$ for $i \in\lbrace 1,2\ldots,l\rbrace$. Then $\left\lbrace v_0,v_1,\ldots,v_l\right\rbrace $ is a subset of a $2$-twinless block in $G$.
\end{lemma}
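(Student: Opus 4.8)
The plan is to reduce the statement to showing that every pair of vertices in the set $S:=\left\lbrace v_0,v_1,\ldots,v_l\right\rbrace$ is $\overset{2t}{\leftrightsquigarrow}$-related. Indeed, once $v_i \overset{2t}{\leftrightsquigarrow} v_j$ holds for all $i,j$, the set $S$ is a set (of size at least two) in which all pairs are related, and any such set extends to a maximal one, which is by definition a $2$-twinless block containing $S$. Thus the entire content of the lemma lies in establishing $v_i \overset{2t}{\leftrightsquigarrow} v_j$ for every pair, and I would phrase the proof around this single claim.

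The hard part, and the reason the hypothesis is stated as a \emph{cycle} rather than a path, is that $\overset{2t}{\leftrightsquigarrow}$ is \emph{not} transitive: from $v_{i-1}\overset{2t}{\leftrightsquigarrow}v_i$ and $v_i\overset{2t}{\leftrightsquigarrow}v_{i+1}$ one cannot conclude $v_{i-1}\overset{2t}{\leftrightsquigarrow}v_{i+1}$, because neither hypothesis says anything about the removal of the intermediate vertex $w=v_i$. The extra relation $v_l\overset{2t}{\leftrightsquigarrow}v_0$ is exactly what repairs this defect, by allowing a detour around any single deleted vertex.

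Concretely, I would fix distinct $v_i,v_j\in S$ together with an arbitrary $w\in V\setminus\left\lbrace v_i,v_j\right\rbrace$, and show that $v_i$ and $v_j$ lie in the same twinless strongly connected component of $G\setminus\left\lbrace w\right\rbrace$. Reading the indices cyclically, the relation holds between every two consecutive vertices of the cycle $v_0,v_1,\ldots,v_l,v_0$, and the two vertices $v_i,v_j$ split this cycle into two arcs. Since $w$ can coincide with at most one vertex of $S$, and with neither $v_i$ nor $v_j$, at least one of the two arcs, say $v_i=v_{k_0},v_{k_1},\ldots,v_{k_p}=v_j$, contains no occurrence of $w$. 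For each consecutive pair $v_{k_t},v_{k_{t+1}}$ along this arc we have $v_{k_t}\overset{2t}{\leftrightsquigarrow}v_{k_{t+1}}$ with $w\notin\left\lbrace v_{k_t},v_{k_{t+1}}\right\rbrace$, so by the definition of $\overset{2t}{\leftrightsquigarrow}$ the vertices $v_{k_t}$ and $v_{k_{t+1}}$ belong to the same twinless strongly connected component of $G\setminus\left\lbrace w\right\rbrace$.

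Finally, because lying in a common twinless strongly connected component is a transitive relation (by [\cite{SR06}, Lemma $1$], already used in Lemma \ref{def:2tbsHasAtMostOneVertex}), chaining these equalities along the chosen arc shows that $v_i=v_{k_0}$ and $v_j=v_{k_p}$ are in the same twinless strongly connected component of $G\setminus\left\lbrace w\right\rbrace$. As $w$ was arbitrary, $v_i\overset{2t}{\leftrightsquigarrow}v_j$, which gives the desired pairwise relation and hence places $S$ inside a $2$-twinless block. The only genuine obstacle is the non-transitivity of $\overset{2t}{\leftrightsquigarrow}$ noted above; everything else amounts to the observation that deleting a single vertex leaves one of the two cyclic arcs intact.
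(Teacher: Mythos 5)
Your proof is correct and takes essentially the same route as the paper: the paper argues by contradiction with two cases on whether the deleted vertex $z$ lies on the arc $v_{r+1},\ldots,v_{q-1}$, which is exactly your observation that removing one vertex leaves at least one of the two cyclic arcs intact, followed by the same chaining of ``same twinless strongly connected component'' via [\cite{SR06}, Lemma~$1$]. Your direct formulation (fixing an arbitrary $w$ and choosing the $w$-free arc) is merely a contrapositive-free rephrasing of the paper's case analysis, with the added benefit of making explicit the non-transitivity of $\overset{2t}{\leftrightsquigarrow}$ and the extension of a pairwise-related set to a maximal block, both of which the paper leaves implicit.
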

\begin{proof}	
Assume for the purpose of contradiction that there are two vertices $v_r$ and $v_q$ in $G$ such that $v_r$ and $v_q$ are in distinct $2$-twinless blocks of $G$ and $r,q\in \lbrace 0,1,\ldots,l\rbrace$. Suppose without loss of generality that $r<q$. Then there is a vertex $z\in V\setminus\lbrace v_r,v_q\rbrace$ such that $v_r$ and $v_q$ are in distinct twinless strongly connected components of $G\setminus\left\lbrace z\right\rbrace $. We distinguish two cases.
	\begin{enumerate}
		\item $z \in \lbrace v_{r+1},v_{r+2},\ldots,v_{q-1}\rbrace$. In this case, the vertices $v_{i-1},v_i$ belong to the same twinless strongly connected component of $G\setminus \lbrace z\rbrace$ for each $i\in \lbrace 1,2,\ldots,r\rbrace \cup\lbrace q+1,q+2,\ldots,l\rbrace$. Moreover, the vertices $v_0,v_l$ are in the same twinless strongly connected component of $G\setminus \lbrace z\rbrace$ because  $v_0 \overset{2t}{\leftrightsquigarrow } v_l$. Therefore, the vertices $v_r,v_q$ are in the same twinless strongly connected component of the graph $G\setminus \lbrace z\rbrace$, a contradiction.
			\item $z\notin \lbrace v_{r+1},v_{r+2},\ldots,v_{q-1}\rbrace$. Then, for each $i\in \lbrace r+1,r+2,\ldots,q\rbrace$, the vertices $v_{i-1},v_i$ lie in the same twinless strongly connected component of $G\setminus \lbrace z\rbrace$. Consequently, the vertices $v_r,v_q$ belong to the same twinless strongly connected component of the graph $G\setminus \lbrace z\rbrace$, a contradiction.
	\end{enumerate}
\end{proof}

Let $G=(V,E)$ be a twinless strongly connected graph. We construct the \textit{$2$-twinless block graph} $G^{2t}=(V^{2t},E^{2t})$ of $G$ as follows. For every $2$-twinlesss block $B_{i}$, we add a vertex $v_{i}$ to $V^{2t}$. Moreover, for each vertex $v\in V$, if $v$ belongs to at least two distinct $2$-twinless blocks, we add a vertex $v$ to $V^{2t}$. For any distinct $2$-twinless blocks $B_{i}, B_{j}$ with $B_{i} \cap B_{j}=\left\lbrace v \right\rbrace $, we put two undirected edges $(v_{i},v), (v,v_{j})$ into $E^{2t}$. 
\begin{lemma}\label{def:G2tforest}
 The $2$-twinless block graph of a twinless strongly connected graph is a forest.
\end{lemma}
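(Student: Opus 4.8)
The plan is to prove that $G^{2t}$ is acyclic, since a forest is exactly an undirected graph with no cycle. First I would pin down the structure of $G^{2t}$: every edge produced by the construction joins a block node $v_i$ to a node representing a vertex $v\in V$ that lies in at least two distinct $2$-twinless blocks (a \emph{shared} vertex), and the edge $(v_i,v)$ appears precisely when $v\in B_i$ and $v$ is shared. Hence $G^{2t}$ is bipartite, with the block nodes forming one part and the shared vertices the other, and no edge lies inside a single part. Consequently every cycle of $G^{2t}$ strictly alternates between block nodes and shared-vertex nodes.

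Next I would assume for contradiction that $G^{2t}$ contains a cycle $C$ and write it as $v_{i_0},u_0,v_{i_1},u_1,\ldots,v_{i_{k-1}},u_{k-1}$ (returning to $v_{i_0}$), where the block nodes $v_{i_0},\ldots,v_{i_{k-1}}$ are pairwise distinct, the shared vertices $u_0,\ldots,u_{k-1}$ are pairwise distinct, and $k\geq 2$. Reading indices modulo $k$, the two cycle edges at $u_j$ are $(v_{i_j},u_j)$ and $(u_j,v_{i_{j+1}})$, so $u_j\in B_{i_j}\cap B_{i_{j+1}}$. In particular any two cyclically consecutive shared vertices $u_{j-1},u_j$ both lie in $B_{i_j}$; since any two vertices of one $2$-twinless block are related by $\overset{2t}{\leftrightsquigarrow }$, this yields $u_{j-1}\overset{2t}{\leftrightsquigarrow } u_j$ for each $j$, together with the closing relation $u_{k-1}\overset{2t}{\leftrightsquigarrow } u_0$ (both of these vertices lie in $B_{i_0}$).

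I would then apply Lemma~\ref{def:AllVerticesOfCycleAreIn2twinlessgBlock} to the sequence $u_0,u_1,\ldots,u_{k-1}$ to conclude that all these shared vertices belong to a single $2$-twinless block $B$. This gives the contradiction: fix any block node $v_{i_j}$ of $C$; its two neighbours on $C$ are the distinct shared vertices $u_{j-1}$ and $u_j$, both of which lie in $B_{i_j}$ by the previous paragraph and in $B$ by Lemma~\ref{def:AllVerticesOfCycleAreIn2twinlessgBlock}. Since $u_{j-1}\neq u_j$, the blocks $B$ and $B_{i_j}$ share two vertices, so Lemma~\ref{def:2tbsHasAtMostOneVertex} forces $B=B_{i_j}$. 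As this holds for every $j$, all block nodes of $C$ coincide, contradicting their distinctness; therefore $G^{2t}$ has no cycle and is a forest. The only delicate point is the cyclic (modulo $k$) bookkeeping — verifying that the closing relation $u_{k-1}\overset{2t}{\leftrightsquigarrow } u_0$ holds and that each block node has two \emph{distinct} shared-vertex neighbours on $C$ — but both follow immediately from the alternating structure of a cycle in a bipartite graph, so I expect no substantial obstacle.
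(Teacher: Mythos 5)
Your proof is correct and follows essentially the same route as the paper: the paper's own proof consists of the single remark that the claim follows from Lemma~\ref{def:AllVerticesOfCycleAreIn2twinlessgBlock} and Lemma~\ref{def:2tbsHasAtMostOneVertex}, and your argument is exactly the careful expansion of that remark (extract the shared vertices along an alleged cycle, chain them via $\overset{2t}{\leftrightsquigarrow}$ into one block by Lemma~\ref{def:AllVerticesOfCycleAreIn2twinlessgBlock}, then collapse all block nodes by Lemma~\ref{def:2tbsHasAtMostOneVertex}). The bipartite bookkeeping and the closing relation $u_{k-1}\overset{2t}{\leftrightsquigarrow }u_0$ are handled correctly, so nothing is missing.
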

\begin{proof} This  follows from Lemma \ref{def:AllVerticesOfCycleAreIn2twinlessgBlock} and Lemma \ref{def:2tbsHasAtMostOneVertex}.
\end{proof}

\begin{lemma}\label{def:Relationbetween2sAndtAPs}
	Let $G=(V,E)$ be a twinless strongly connected graph and let $x,y$ be distinct vertices in $G$. Suppose that $v \in V\setminus \lbrace x,y \rbrace$ is not a twinless articulation point. Then $x,y$ are in the same twinless strong connected component of $G\setminus \lbrace v\rbrace$.
\end{lemma}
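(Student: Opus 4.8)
The plan is to argue directly from the definition of a twinless articulation point, together with the hypothesis that $G$ is twinless strongly connected, and the fact that the twinless strongly connected components of any directed graph partition its vertex set. First I would observe that, since $G$ is twinless strongly connected, the entire vertex set $V$ is the unique twinless strongly connected component of $G$; equivalently, the number of twinless strongly connected components of $G$ equals $1$.

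Next I would invoke the assumption that $v$ is not a twinless articulation point. By definition, removing a twinless articulation point strictly increases the number of twinless strongly connected components; since $v$ is not such a point, deleting $v$ does not increase this count. Combined with the previous step, this shows that $G \setminus \lbrace v \rbrace$ has at most one twinless strongly connected component.

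Then I would note that $G \setminus \lbrace v \rbrace$ is nonempty, since it still contains both $x$ and $y$, so it has at least one twinless strongly connected component. Combining the two bounds, $G \setminus \lbrace v \rbrace$ has exactly one twinless strongly connected component; because the twinless strongly connected components partition the vertex set, this single component must contain every vertex of $V \setminus \lbrace v \rbrace$. Hence $x$ and $y$ lie in the same (indeed, the unique) twinless strongly connected component of $G \setminus \lbrace v \rbrace$, as required.

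There is no serious obstacle here: the statement is a direct counting consequence of the definitions. The only point requiring a little care is the translation of \emph{``$v$ is not a twinless articulation point''} into the inequality \emph{``at most one component''}, and then pinning the count to exactly one by using that the remaining graph still contains the two vertices $x$ and $y$. It is worth remarking that the conclusion in fact holds for every pair of vertices of $G \setminus \lbrace v \rbrace$, not merely for the chosen pair $x,y$.
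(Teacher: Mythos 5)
Your proof is correct and is essentially the paper's argument: the paper disposes of this lemma with the single line ``Immediate from the definition,'' and your write-up simply makes that definitional counting explicit (one twinless strongly connected component in $G$, removal of a non-twinless-articulation-point cannot increase the count, and the components of $G\setminus\lbrace v\rbrace$ partition its nonempty vertex set, forcing exactly one component containing both $x$ and $y$). The only tacit ingredient is that twinless strongly connected components do partition the vertex set (singletons are twinless strongly connected, and overlapping components merge by Lemma~1 of Raghavan cited in the paper), which the paper also takes as standard.
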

\begin{proof}
	Immediate from the definition. 
\end{proof}
Algorithm \ref{algo:algor2forall2twinlessblocks} shows an algorithm for computing the $2$-twinless blocks of a twinless strongly connected graph $G=(V,E)$.

\begin{figure}[htbp]
	\begin{myAlgorithm}\label{algo:algor2forall2twinlessblocks}\rm\quad\\[-5ex]
		\begin{tabbing}
			\quad\quad\=\quad\=\quad\=\quad\=\quad\=\quad\=\quad\=\quad\=\quad\=\kill
			\textbf{Input:} A twinless strongly connected graph $G=(V,E)$.\\
			\textbf{Output:} The $2$-twinless blocks of $G$.\\
			{\small 1}\> \textbf{if} $G$ is $2$-vertex-twinless connected \textbf{then}\\
			{\small 2}\>\> Output $V$.\\
			{\small 3}\> \textbf{else}\\
			{\small 4}\>\> Let $S$ be an $n\times n$ matrix.\\
			{\small 5}\>\> Initialize $S$ with $1$s.\\
			{\small 6}\>\> determine the twinless articulation points of $G$.\\
			{\small 7}\>\> \textbf{for} each twinless articulation point $z$ of $G$ \textbf{do} \\
			{\small 8}\>\>\> Identify the twinless strongly connected components of $G\setminus \lbrace z\rbrace$.\\
			{\small 9}\>\>\> \textbf{for} each pair $(v,w) \in (V \setminus\lbrace z\rbrace)\times (V \setminus\lbrace z\rbrace)$ \textbf{do} \\
			{\small 10}\>\>\>\> \textbf{if} $v,w$ in different twinless strongly connected components of $G\setminus \lbrace z\rbrace$ \textbf{then}\\
			{\small 11}\>\>\>\>\> $S[v,w] \leftarrow 0$. \\
			{\small 12}\>\> $E^{b} \leftarrow \emptyset$. \\
			{\small 13}\>\> \textbf{for} each pair $(v,u) \in V\times V $ \textbf{do} \\
			{\small 14}\>\>\> \textbf{if} $S[v,u]=1$ and $S[u,v]=1$ \textbf{then} \\
			{\small 15}\>\>\>\> $E^{b}\leftarrow E^{b} \cup \left\lbrace (v,u)\right\rbrace  $. \\
			{\small 16}\>\> calculate the blocks of size $\geq 2$ of $G^{b}=(V,E^{b})$ and output them. 
		\end{tabbing}
	\end{myAlgorithm}
\end{figure}

The correctness of Algorithm \ref{algo:algor2forall2twinlessblocks} follows from the following lemma. 
\begin{lemma}
	A vertex subset $B\subseteq V$ is a $2$-twinless block of $G$ if and only if $B$ is a block of the undirected graph $G^{b}=(V,E^{b})$ which is constructed in lines $12$--$15$ of Algorithm \ref{algo:algor2forall2twinlessblocks}  
\end{lemma}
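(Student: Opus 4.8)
The plan is to prove the equivalence in three stages: first identify the edge set $E^b$ with the relation $\overset{2t}{\leftrightsquigarrow}$, then reformulate $2$-twinless blocks as maximal cliques of $G^b$, and finally show that these maximal cliques are exactly the blocks of $G^b$ of size at least $2$.

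\textbf{Step 1 (the edges of $G^b$ encode $\overset{2t}{\leftrightsquigarrow}$).} I would first argue that, after the loops in lines $7$--$11$ terminate, the entry $S[v,u]$ equals $1$ if and only if $v$ and $u$ lie in the same twinless strongly connected component of $G\setminus\lbrace z\rbrace$ for every twinless articulation point $z\in V\setminus\lbrace v,u\rbrace$. By Lemma \ref{def:Relationbetween2sAndtAPs}, a vertex that is not a twinless articulation point never separates $v$ and $u$, so restricting the check to twinless articulation points loses nothing. Since "lying in different twinless strongly connected components" is symmetric, $S$ stays symmetric and the conjunction tested in line $14$ is equivalent to $S[v,u]=1$; hence $(v,u)\in E^b$ precisely when $v$ and $u$ are in the same twinless strongly connected component of $G\setminus\lbrace w\rbrace$ for \emph{all} $w\in V\setminus\lbrace v,u\rbrace$, i.e.\ exactly when $v\overset{2t}{\leftrightsquigarrow}u$. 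Thus adjacency in $G^b$ is exactly the relation $\overset{2t}{\leftrightsquigarrow}$.

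\textbf{Step 2 ($2$-twinless blocks are maximal cliques of $G^b$).} Given Step $1$, the definition of a $2$-twinless block translates verbatim: $B$ is a $2$-twinless block of $G$ if and only if $B$ is a maximal vertex set with $|B|>1$ all of whose pairs are adjacent in $G^b$, that is, a maximal clique of $G^b$ of size at least $2$.

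\textbf{Step 3 (blocks of $G^b$ coincide with its maximal cliques).} The crucial ingredient here is Lemma \ref{def:AllVerticesOfCycleAreIn2twinlessgBlock}: a cycle of $G^b$ is precisely a sequence $v_0,v_1,\ldots,v_l$ with $v_{i-1}\overset{2t}{\leftrightsquigarrow}v_i$ and $v_l\overset{2t}{\leftrightsquigarrow}v_0$, so by that lemma all vertices lying on a common cycle of $G^b$ belong to one $2$-twinless block and are therefore pairwise adjacent. Now let $\mathcal{B}$ be a block of $G^b$ with $|\mathcal{B}|\geq 2$. If $\mathcal{B}$ is a single bridge edge it is a $K_2$, hence a clique; if $|\mathcal{B}|\geq 3$ then $\mathcal{B}$ is $2$-connected, so any two of its vertices lie on a common cycle and are thus adjacent, making $\mathcal{B}$ a clique. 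This clique cannot be properly extended: if some $u\notin\mathcal{B}$ were adjacent to two vertices $a,b\in\mathcal{B}$, the triangle $a,b,u$ would place $(a,u)$ and $(b,u)$ in the same block as $(a,b)$, forcing $u\in\mathcal{B}$; hence $\mathcal{B}$ is a maximal clique. Conversely, if $K$ is a maximal clique with $|K|\geq 2$, then all its edges lie on common cycles (triangles for intersecting edges, $4$-cycles for disjoint ones) and therefore in a single block $\mathcal{B}'$; since $\mathcal{B}'$ is itself a clique containing $K$, maximality of $K$ gives $K=\mathcal{B}'$. Chaining Steps $1$--$3$ yields the stated equivalence.

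\textbf{Main obstacle.} The bookkeeping in Step $1$ is routine once Lemma \ref{def:Relationbetween2sAndtAPs} is invoked. The real work is Step $3$: reconciling the graph-theoretic notion of a block (a maximal biconnected subgraph, including isolated bridge edges) with the combinatorial notion of a maximal clique. This is exactly where Lemma \ref{def:AllVerticesOfCycleAreIn2twinlessgBlock} is indispensable, since it rules out any chordless cycle of length at least four in $G^b$ and thereby forces every block to be complete; the one point requiring care is the treatment of size-$2$ (bridge) blocks, which must be checked to be genuinely maximal cliques rather than parts of a larger biconnected component.
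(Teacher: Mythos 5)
Your proposal is correct and takes essentially the same route as the paper: the paper's proof consists precisely of invoking Lemma \ref{def:AllVerticesOfCycleAreIn2twinlessgBlock} and Lemma \ref{def:Relationbetween2sAndtAPs}, and your three steps (adjacency in $G^{b}$ equals $\overset{2t}{\leftrightsquigarrow}$ via Lemma \ref{def:Relationbetween2sAndtAPs}, blocks translate to maximal cliques, and cycles force completeness via Lemma \ref{def:AllVerticesOfCycleAreIn2twinlessgBlock}) are exactly the details the paper leaves implicit. In particular, your careful handling of the size-$2$ bridge blocks and of the clique-versus-block reconciliation fills in the argument the paper compresses into one line, with no deviation in approach.
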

\begin{proof}
	It follows from Lemma \ref{def:AllVerticesOfCycleAreIn2twinlessgBlock} and Lemma \ref{def:Relationbetween2sAndtAPs}.
\end{proof}
\begin{theorem}
	Algorithm \ref{algo:algor2forall2twinlessblocks} runs in $O(n^{3})$ time.
\end{theorem}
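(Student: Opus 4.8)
The plan is a straightforward line-by-line accounting of the running time. I would use throughout that $m=O(n^{2})$, so that every linear-time subroutine invoked by the algorithm costs $O(n+m)=O(n^{2})$, and then identify the loop of lines 7--11 as the term that dominates the total. First I would dispatch the preprocessing. The test in line 1 of whether $G$ is $2$-vertex-twinless-connected reduces, by the definition given earlier, to checking that $G$ has at least three vertices and no twinless articulation point; by the linear-time algorithm of Georgiadis and Kosinas \cite{GeorgiadisKosinas20} this is $O(n+m)=O(n^{2})$, and line 6 recomputes the twinless articulation points at the same cost. Allocating and initializing the matrix $S$ in lines 4--5 is $O(n^{2})$.

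Next I would analyze the main loop in lines 7--11. There are at most $n$ twinless articulation points, so the loop body executes $O(n)$ times. Within a single iteration, line 8 computes the twinless strongly connected components of $G\setminus\{z\}$ in linear time $O(n+m)=O(n^{2})$ by Raghavan's algorithm \cite{SR06}, and the double loop of lines 9--11, which inspects all $O(n^{2})$ ordered pairs and, using the component labels produced in line 8, updates an entry of $S$ in constant time per pair, is also $O(n^{2})$. Hence each iteration costs $O(n^{2})$ and the whole loop costs $O(n\cdot n^{2})=O(n^{3})$, which is the quantity that governs the overall bound.

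Finally I would account for the postprocessing. Building the edge set $E^{b}$ in lines 12--15 scans all $O(n^{2})$ pairs with $O(1)$ work each, so it is $O(n^{2})$; and since $|E^{b}|=O(n^{2})$, extracting the blocks of size at least $2$ of the undirected graph $G^{b}$ in line 16 takes linear time $O(n+|E^{b}|)=O(n^{2})$ by a standard depth-first search \cite{T72}. Summing all contributions, the $O(n^{3})$ cost of lines 7--11 dominates, which yields the claimed bound. I do not expect a genuine obstacle here; the only point requiring care is confirming that the per-removal work is truly $O(n^{2})$, which rests on both the linear-time twinless-strong-component decomposition and on the fact that membership of a pair in a common component can be read off in constant time from the component identifiers, so that no individual step silently exceeds $O(n^{3})$.
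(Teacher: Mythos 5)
Your proposal is correct and follows the same line-by-line accounting as the paper's own proof: at most $n$ twinless articulation points (computed in linear time via Georgiadis--Kosinas), $O(n^{2})$ work per removed vertex using Raghavan's linear-time twinless strongly connected component decomposition, plus $O(n^{2})$ for the matrix and the final linear-time block computation. Your version merely spells out details the paper leaves implicit (e.g.\ that $m=O(n^{2})$ and that pair comparisons via component labels cost $O(1)$), so nothing further is needed.
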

\begin{proof}
Georgiadis and Kosinas \cite{GeorgiadisKosinas20} showed that the twinless articulation points can be computed in linear time.
	The initialization of matrix $S$ takes $O(n^{2})$ time.
	The number of iterations of the for-loop in lines $7$--$11$ is at most $n$ because the number of twinless articulation points
	is at most $n$. Consequently, lines $7$--$11$ require $O(n^{3})$. Furthermore,  the blocks of an undirected graph can be found in linear time \cite{T72,JS13}. 
\end{proof}

The following lemma shows an important property of $G^{b}$.
\begin{lemma}\label{def:Gbischordal}
The graph $G^{b}$ which is constructed in lines $12$--$15$ of Algorithm \ref{algo:algor2forall2twinlessblocks} is chordal. 
\end{lemma}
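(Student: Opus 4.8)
The plan is to use the standard characterization that a graph is chordal if and only if it contains no induced cycle of length at least four, and to argue that $G^{b}$ has no such induced cycle. The whole argument rests on first identifying the edge set $E^{b}$ with the relation $\overset{2t}{\leftrightsquigarrow }$, after which Lemma \ref{def:AllVerticesOfCycleAreIn2twinlessgBlock} does almost all of the work.

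First I would pin down what the edges of $G^{b}$ encode. Because the inner loop in line $9$ ranges over ordered pairs from $(V\setminus\lbrace z\rbrace)\times(V\setminus\lbrace z\rbrace)$ and the relation ``lie in different twinless strongly connected components of $G\setminus\lbrace z\rbrace$'' is symmetric, the matrix $S$ is symmetric, so the test $S[v,u]=1$ and $S[u,v]=1$ in line $14$ reduces to $S[v,u]=1$. Moreover, $S[v,u]=1$ holds precisely when no twinless articulation point $z$ places $v$ and $u$ in different twinless strongly connected components of $G\setminus\lbrace z\rbrace$. Combining this with Lemma \ref{def:Relationbetween2sAndtAPs}, which guarantees that removing any non-articulation vertex never separates $v$ and $u$, I obtain that $(v,u)\in E^{b}$ if and only if $v\overset{2t}{\leftrightsquigarrow } u$. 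Hence $G^{b}$ is exactly the undirected graph of the relation $\overset{2t}{\leftrightsquigarrow }$.

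Next, suppose for contradiction that $G^{b}$ contains an induced cycle $v_{0},v_{1},\ldots,v_{l},v_{0}$ with $l\ge 3$. Along this cycle consecutive vertices are adjacent in $G^{b}$, so by the edge characterization $v_{i-1}\overset{2t}{\leftrightsquigarrow } v_{i}$ for $i\in\lbrace 1,\ldots,l\rbrace$ and $v_{l}\overset{2t}{\leftrightsquigarrow } v_{0}$. This is exactly the hypothesis of Lemma \ref{def:AllVerticesOfCycleAreIn2twinlessgBlock}, so $\lbrace v_{0},\ldots,v_{l}\rbrace$ lies in a single $2$-twinless block $B$. By the definition of a $2$-twinless block, every pair of vertices of $B$ satisfies $\overset{2t}{\leftrightsquigarrow }$, so every pair among $v_{0},\ldots,v_{l}$ is an edge of $G^{b}$; that is, these vertices induce a complete subgraph. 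Since $l\ge 3$, the cycle contains non-consecutive vertices, and these are now adjacent, contradicting the assumption that the cycle is induced. Therefore $G^{b}$ has no induced cycle of length at least four and is chordal.

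I expect the only real work to be the first step, namely verifying that the combinatorial test implemented by $S$ and $E^{b}$ coincides with the relation $\overset{2t}{\leftrightsquigarrow }$ — in particular the symmetry of $S$ and the appeal to Lemma \ref{def:Relationbetween2sAndtAPs} to handle the vertices that are not twinless articulation points. Once the edge set of $G^{b}$ is matched with the relation, the chordality is essentially immediate, since Lemma \ref{def:AllVerticesOfCycleAreIn2twinlessgBlock} collapses any candidate long induced cycle into a clique.
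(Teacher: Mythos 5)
Your proof is correct and takes essentially the same route as the paper, whose entire proof of this lemma is the one-line remark that it follows from Lemma \ref{def:AllVerticesOfCycleAreIn2twinlessgBlock}: you merely make explicit the two steps left implicit there, namely that $(v,u)\in E^{b}$ if and only if $v \overset{2t}{\leftrightsquigarrow } u$ (using the symmetry of $S$ and Lemma \ref{def:Relationbetween2sAndtAPs} for non-articulation vertices), and that any induced cycle of length at least four would then lie in a single $2$-twinless block and hence be a clique, a contradiction. Your expansion is faithful and gap-free.
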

\begin{proof}
	It follows from Lemma \ref{def:AllVerticesOfCycleAreIn2twinlessgBlock}.
\end{proof}
By Lemma \ref{def:Gbischordal}, one can calculate the maximal cliques of  $G^{b}$ instead of blocks. The maximal cliques of a chordal graph can be calculated in linear time \cite{G72,RT75}.
\section{An improved version of Algorithm \ref{algo:algor2forall2twinlessblocks}}
In this section we present an improved version of Algorithm \ref{algo:algor2forall2twinlessblocks}. 

The following lemma shows a connection between $2$-twinless blocks and $2$-strong blocks.
\begin{lemma}\label{def:Each2tIsIn2sb}
	Let $G=(V,E)$ be a twinless strongly connected graph. Suppose that $B_{t}$ is a $2$-twinless block in $G$. Then $B_{t}$ is a subset of a $2$-strong block in $G$. 
\end{lemma}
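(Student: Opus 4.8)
The plan is to exploit the fact that twinless strong connectivity is strictly stronger than ordinary strong connectivity, so that membership in a common twinless strongly connected component forces membership in a common strongly connected component. This lets me show that the condition defining a $2$-twinless block is a special case of the condition defining a $2$-strong block. First I would record the elementary observation underlying everything: if a vertex set induces a twinless strongly connected subgraph, then it induces a strongly connected subgraph. Indeed, by definition the edge set $E^{t}$ witnessing twinless strong connectivity is already a strongly connected spanning subgraph, and dropping the antiparallel-free requirement only weakens the property. Hence, for any graph $H$, if two vertices lie in the same twinless strongly connected component of $H$, they also lie in the same strongly connected component of $H$.

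Next I would apply this with $H = G\setminus\lbrace u\rbrace$. Let $x,y$ be any two distinct vertices of the $2$-twinless block $B_{t}$, so that $x \overset{2t}{\leftrightsquigarrow} y$. By the definition of $\overset{2t}{\leftrightsquigarrow}$, for every vertex $u\in V\setminus\lbrace x,y\rbrace$ the vertices $x$ and $y$ lie in the same twinless strongly connected component of $G\setminus\lbrace u\rbrace$, and therefore, by the observation above, in the same strongly connected component of $G\setminus\lbrace u\rbrace$. Since the pair $x,y$ was arbitrary, the set $B_{t}$ satisfies precisely the defining condition of a $2$-strong set: for every pair of distinct vertices in $B_{t}$ and every vertex $u$ outside that pair, the two vertices share a strongly connected component of $G\setminus\lbrace u\rbrace$.

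Finally I would promote this $2$-strong set to a $2$-strong block. Since $|B_{t}|\geq 2$ and $V$ is finite, I can greedily enlarge $B_{t}$, adding vertices one at a time while the $2$-strong condition is preserved, until no further vertex can be added. The resulting set is a maximal vertex subset of size at least $2$ satisfying the $2$-strong condition, which is by definition a $2$-strong block, and it contains $B_{t}$.

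I expect the only genuinely substantive point to be the opening observation that twinless strong connectivity implies strong connectivity; the remainder is a direct unwinding of the two definitions together with the trivial extension-to-maximality argument. One subtlety worth stating explicitly is that no transitivity or cycle argument (the analogue of Lemma \ref{def:AllVerticesOfCycleAreIn2twinlessgBlock} for $2$-strong blocks) is needed here: because all pairs of $B_{t}$ are simultaneously related, $B_{t}$ is already a single $2$-strong set rather than merely a chain of pairwise-related vertices, so it sits inside one $2$-strong block directly.
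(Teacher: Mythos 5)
Your proposal is correct and follows essentially the same route as the paper: observe that a twinless strongly connected component of $G\setminus\lbrace u\rbrace$ is contained in a strongly connected component of $G\setminus\lbrace u\rbrace$, conclude that every pair of vertices of $B_{t}$ satisfies the defining condition of a $2$-strong block, and hence $B_{t}$ lies in one. Your explicit greedy extension to a maximal $2$-strong set is a point the paper leaves implicit, but it is the same argument, just spelled out more carefully.
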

\begin{proof}
	Let $v$ and $w $ be distinct vertices in $B_{t}$, and let $x\in V\setminus\lbrace v,w\rbrace$. By definition, the vertices $v,w$ belong to the same twinless strongly connected component $C$ of $G\setminus\lbrace x\rbrace$. Since $C$ is a subset of a strongly connected component of $G$, the vertices  $v,w$ also lie in the strongly connected component of $G\setminus\lbrace x\rbrace$. Consequently, $v,w$ are in the same $2$-strong block in $G$.
\end{proof}

Algorithm \ref{algo:improvedalgor2forall2twinlessblocks} describes this improved version which is based on Lemma \ref{def:Each2tIsIn2sb} and Lemma \ref{def:AllVerticesOfCycleAreIn2twinlessgBlock}.
\begin{figure}[h]
	\begin{myAlgorithm}\label{algo:improvedalgor2forall2twinlessblocks}\rm\quad\\[-5ex]
		\begin{tabbing}
			\quad\quad\=\quad\=\quad\=\quad\=\quad\=\quad\=\quad\=\quad\=\quad\=\kill
			\textbf{Input:} A twinless strongly connected graph $G=(V,E)$.\\
			\textbf{Output:} The $2$-twinless blocks of $G$.\\
			{\small 1}\> \textbf{if} $G$ is $2$-vertex-twinless connected \textbf{then}\\
			{\small 2}\>\> Output $V$.\\
			{\small 3}\> \textbf{else}\\
			{\small 4}\>\> find the $2$-strong blocks of $G$\\
			{\small 5}\>\> Let $S$ be an $n\times n$ matrix.\\
			{\small 6}\>\> Initialize $S$ with $0$.\\
			{\small 7}\>\> $A\leftarrow \emptyset $.\\
			{\small 8}\>\> \textbf{for} each $2$-strong block $B$ of $G$ \textbf{do} \\
			{\small 9}\>\>\>\textbf{for} each pair of vertices $v,w \in B$ \textbf{do} \\
			{\small 10}\>\>\>\> $S[v,w] \leftarrow 1$ \\
			{\small 11}\>\>\>\> $S[w,v] \leftarrow 1$ \\
			{\small 12}\>\>\>\textbf{for} each vertex $v \in B$ \textbf{do} \\
			{\small 13}\>\>\>\>\textbf{if} $v \notin A$ \textbf{then}\\
			{\small 14}\>\>\>\>\>\> add $v$ to $A$\\
			{\small 15}\>\> determine the twinless articulation points of $G$.\\
			{\small 16}\>\> \textbf{for} each twinless articulation point $z$ of $G$ \textbf{do} \\	
			{\small 17}\>\>\>\> Identify the twinless strongly connected components of $G\setminus \lbrace z\rbrace$.\\
			{\small 18}\>\>\>\> \textbf{for} each pair $(v,w) \in (A \setminus\lbrace z\rbrace)\times (A \setminus\lbrace z\rbrace)$ \textbf{do} \\
			{\small 19}\>\>\>\>\> \textbf{if} $v,w$ in different twinless strongly connected components of $G\setminus \lbrace z\rbrace$ \textbf{then}\\
			{\small 20}\>\>\>\>\>\> $S[v,w] \leftarrow 0$. \\
			{\small 21}\>\> $E^{b} \leftarrow \emptyset$. \\
			{\small 22}\>\> \textbf{for} each pair $(v,u) \in A\times A $ \textbf{do} \\
			{\small 23}\>\>\> \textbf{if} $S[v,u]=1$ and $S[u,v]=1$ \textbf{then} \\
			{\small 24}\>\>\>\> $E^{b}\leftarrow E^{b} \cup \left\lbrace (v,u)\right\rbrace  $. \\
			{\small 25}\>\> calculate the blocks of size $\geq 2$ of $G^{b}=(A,E^{b})$ and output them. 
		\end{tabbing}
	\end{myAlgorithm}
\end{figure}

\begin{theorem}
	The running time of Algorithm \ref{algo:improvedalgor2forall2twinlessblocks} is $O(t(s^{2}+m)+n^{2})$, where $s=|A|$ and $t$ is the number of twinless articulation points of $G$.
\end{theorem}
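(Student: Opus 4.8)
The plan is to account for the running time of Algorithm \ref{algo:improvedalgor2forall2twinlessblocks} line by line, grouping the work into three parts: the one-time preprocessing, the per-twinless-articulation-point loop, and the final block computation. First I would handle the preprocessing in lines $4$--$14$. By the linear-time algorithm of Georgiadis et al. for $2$-strong blocks, line $4$ costs $O(m)$ (assuming $m\geq n$; otherwise $O(n+m)$). Initializing the $n\times n$ matrix $S$ in lines $5$--$6$ costs $O(n^{2})$. The nested loops of lines $8$--$14$ mark, for each $2$-strong block $B$, every ordered pair inside $B$ and collect the relevant vertices into the set $A$. Since $2$-strong blocks share at most one vertex, the sum of $|B|$ over all blocks is $O(n)$, so the total number of pairs processed is $O(s^{2})$ where $s=|A|$, and building $A$ adds at most $O(n)$; thus this stage is absorbed into the $O(s^{2}+n^{2})$ budget.

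Next I would bound the main loop in lines $15$--$20$. Computing the twinless articulation points in line $15$ takes linear time by Georgiadis and Kosinas \cite{GeorgiadisKosinas20}, i.e. $O(m)$. The loop in lines $16$--$20$ iterates once per twinless articulation point, so $t$ times. Inside each iteration, identifying the twinless strongly connected components of $G\setminus\lbrace z\rbrace$ costs $O(m)$ by Raghavan's linear-time algorithm \cite{SR06}, and the double loop over pairs in $(A\setminus\lbrace z\rbrace)\times(A\setminus\lbrace z\rbrace)$ costs $O(s^{2})$ since $|A|=s$. Hence each iteration is $O(s^{2}+m)$ and the whole loop is $O(t(s^{2}+m))$. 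This is the crucial improvement over Algorithm \ref{algo:algor2forall2twinlessblocks}: the pair loop now ranges over $A$ rather than all of $V$, and by Lemma \ref{def:Each2tIsIn2sb} only vertices lying in some $2$-strong block can share a $2$-twinless block, so restricting attention to $A$ loses nothing.

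Finally, the block-extraction phase in lines $21$--$25$ scans all pairs in $A\times A$ to build $E^{b}$, costing $O(s^{2})$, after which the blocks of the undirected graph $G^{b}=(A,E^{b})$ are found in linear time in the size of $G^{b}$ by \cite{T72,JS13}; since $G^{b}$ has $s$ vertices and $O(s^{2})$ edges, this is again $O(s^{2})$. Summing the three parts gives $O(m)+O(t(s^{2}+m))+O(s^{2})+O(n^{2})$, which collapses to $O(t(s^{2}+m)+n^{2})$ as claimed, the $O(n^{2})$ term coming solely from the matrix initialization. The main obstacle I anticipate is not any single estimate but making precise why the pair-enumeration is legitimately confined to $A$ and why the dominant $t(s^{2}+m)$ term truly replaces the cruder $O(n^{3})$ bound; this rests on Lemma \ref{def:Each2tIsIn2sb} together with the at-most-one-shared-vertex property, and I would state explicitly that $s$ can be much smaller than $n$ so that the new bound is a genuine refinement rather than a rewriting of $O(n^{3})$.
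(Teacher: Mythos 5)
Your proposal is correct and follows essentially the same route as the paper: the same line-by-line accounting with linear-time $2$-strong blocks \cite{GILP14VertexConnectivity}, linear-time twinless articulation points \cite{GeorgiadisKosinas20}, the $O(t(s^{2}+m))$ bound for the main loop via Raghavan's linear-time twinless strongly connected components plus the $O(s^{2})$ pair scan, and the $O(n^{2})$ term from initializing $S$. You merely spell out details the paper leaves implicit (e.g.\ the cost of lines $8$--$14$ and the final block extraction), which is harmless; the brief aside on why restricting pairs to $A$ is legitimate concerns correctness rather than the stated running-time claim.
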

\begin{proof}
	The $2$-strong blocks of a directed graph can be computed in linear time \cite{GILP14VertexConnectivity}. Furthermore, the twinless articulation points of a directed graph can be identified in linear time using the algorithm of Georgiadis and Kosinas \cite{GeorgiadisKosinas20}. Since the number of iterations of the for-loop in lines $16$--$20$ is at most $t$, lines  $16$--$20$ take $O(t(s^{2}+m))$ time.
\end{proof}

	Let $G=(V,E)$ be a twinless strongly connected graph. If the refine operation defined in \cite{LGILP15} is used to refine the $2$-strong blocks of $G$ for all twinless articulation points, then the $2$-twinless blocks of a directed graph $G=(V,E)$ can be computed in $O(tm)$  time, where $t$ is the number of twinless articulation points of $G$.

We leave as open problem whether the $2$-twinless blocks of a directed graph can be calculated in linear time.
\section*{Acknowledgements.}
The author would like to thank the anonymous reviewers for their helpful comments and suggestions.

\end{document}